

\documentclass[draftclsnofoot,onecolumn,12pt,romanappendices]{IEEEtran}






\usepackage{graphicx}

\usepackage{amssymb}
\usepackage{amsmath}

\usepackage{url}
\usepackage[cmex9]{subeqnarray}
\usepackage{cases}

\usepackage[bookmarks=false,dvipdfm,colorlinks,,linkcolor=blue]{hyperref}

\usepackage[thmmarks]{ntheorem}
{
\newtheorem{thm}{Theorem} 
\newtheorem{defn}{Definition} 
\newtheorem{lem}{Lemma} 
}
\newenvironment{proof}{\noindent{\textbf{Proof}}}{\hfill $\square $ \vskip 4mm}

\usepackage{enumerate}

\usepackage{algorithm}
\usepackage{algorithmic}




\begin{document}




\title{An Online Decoding Schedule Generating Algorithm for Successive Cancellation Decoder of Polar Codes}

 \author{\IEEEauthorblockN{Dan~Le, ~Xiamu~Niu$^*$}\\
 \IEEEauthorblockA{The School of Computer Science and Technology, Harbin Institute of Technology, Harbin, 150001 China.}
 }

\maketitle

\begin{abstract}
Successive cancellation (SC) is the first and widely known decoder of polar codes, which has received a lot of attentions recently. However, its decoding schedule generating algorithms are still primitive, which are not only complex but also offline. This paper proposes a simple and online algorithm to generate the decoding schedule of SC decoder. Firstly, the dependencies among likelihood ratios (LR) are explored, which lead to the discovery of a sharing factor. Secondly, based on the online calculation of the sharing factor, the proposed algorithm is presented, which is neither based on the depth-first traversal of the scheduling tree nor based on the recursive construction. As shown by the comparisons among the proposed algorithm and existed algorithms, the proposed algorithm has advantages of the online feature and the far less memory taken by the decoding schedule.

\end{abstract}

\begin{IEEEkeywords}
Polar codes,  successive cancellation decoder, decoding schedule, sharing factor


\end{IEEEkeywords}



\section{Introduction}
\label{sec-Introduction}
Since Shannon presented the noisy channel coding theorem \cite{shannon1948bell}, polar code, introduced by Arikan \cite{arikan2009channel}, is the first class of codes achieving channel capacity with explicit construction. With the successive cancellation (SC) decoder, the channel capacity is asymptotically achieved by codelength $N$. Hence, polar codes have attracted many attentions recently\cite{hussami2009performance, tal2011list, niu2012stack, Alamdar2011A, leroux2011hardware, Chuan2012Reduced, huang2012latency, Leroux2013A, zhang2013low,Sarkis2013Increasing, Sarkis2014Fast, YouZhe2014An, Le2015Multi, Yoo2015Partially}.

As the first and widely known decoder of polar codes, a lot of research efforts have been made on SC decoder\cite{Alamdar2011A, leroux2011hardware, Chuan2012Reduced, huang2012latency, Leroux2013A, zhang2013low,Sarkis2013Increasing, Sarkis2014Fast, YouZhe2014An, Le2015Multi}. Some references focus on the simplified successive cancellation (SSC)\cite{Alamdar2011A, Chuan2012Reduced, Sarkis2013Increasing, Sarkis2014Fast, Le2015Multi}, which simplifies the constituent code with rate zero in SC decoder. SSC can significantly reduce the decoding latency and implementation complexity of SC decoder, but its performance depends strongly on the underlying channel and it requires that all frozen bits must be zeros. However, in some scenarios, the frozen bits can not be zeros, for instance, the error reconciliation in the quantum key distribution\cite{martinez2013key, qiong2014study}. Since SC decoder asks no restrictions on the frozen bits and its performance does not rely on the underlying channel, lots of works\cite{leroux2011hardware, Leroux2013A, zhang2013low, YouZhe2014An} are still done on the SC decoder. \cite{leroux2011hardware} presented an efficient hardware implementation of SC decoder with $O\left(N\right)$ processing elements and memory elements. \cite{Leroux2013A} proposed a semi-parallel SC decoder for resource sharing and processor sharing at the cost of a small increase in decoding latency. \cite{zhang2013low} showed a look-ahead and overlapped architectures to decrease the decoding latency of SC decoder. \cite{YouZhe2014An} proposed an efficient partial sum network architecture to reduce the decoding latency and implementation complexity for semi-parallel SC decoder. Although so many works have been done on SC decoder, its decoding schedule generating algorithms are still primitive. As far as we know, there are just two existed decoding schedule generating algorithms. One is based on the depth-first traversal of the scheduling tree\cite{Alamdar2011A,huang2012latency,Sarkis2013Increasing,YouZhe2014An,Le2015Multi}. The other is based on the recursive construction\cite{Chuan2012Reduced}. However, the problems are that they not only are too complex, but also generate decoding schedule offline and store it in the ROM. To overcome the problems, based on the newly found factor $z_i$, this paper proposes a new algorithm to generate the decoding schedule of SC decoder. The presented algorithm is more simple, obtains the decoding schedule on the fly without introducing any extra delay, and decreases the memory storing the decoding schedule significantly. These advantages reduce the implementation complexity of SC decoder.

The remainder of this paper is organized as follows. In Section \ref{sec-Traditional SC Decoding and Some Notations} we briefly review the SC decoder and introduce some notations. Section \ref{sec-Derive theoretically from recursive formula} explores the dependencies among likelihood ratios (LR), which lead to the discovery of the sharing factor $z_i$. Based on $z_i$, Section \ref{sec-Proposed Algorithm to Obtain Decoding Schedule} presents the proposed decoding schedule generating algorithm. Section \ref{sec-Comparisons} shows the comparisons among the proposed algorithm and existed algorithms. Finally, some conclusions are drawn in Section \ref{sec-Conclusion}.

\section{\label{sec-Traditional SC Decoding and Some Notations} SC Decoder and Some Notations}
First of all, let us list some notations used in this paper,
\begin{itemize}
\item $N=2^n$ is the code length of polar code, and $n = \log_2{N}$
\item $u_1^N$ is a shorthand for a row vector $\left(u_1, \cdots, u_N \right)$, and $u_i^j$, $1 \le i,j \le N$, represents its subvector $\left(u_i, \cdots, u_j \right)$
\item $\left\{a,\cdots,b\right\}$ represents the set of the integers ranging from $a$ to $b$
\item $\&$ is bitwise logical AND operator.
\end{itemize}

Polar codes take advantage of the polarization effect to achieve the channel capacity $I\left(W\right)$, whose channel model is illustrated as Figure \ref{Fig-Polar-Code-Channel-Model}, where $u_1^N$ is the input vector, $W_N$ is a combined channel by $N$ independent copies of channel $W$, and $y_1^N$ is the  output vector with conditional probability ${W_N}\left( {y_1^N|u_1^N} \right)$. For the coordinate channels $W_N^{\left( i \right)}$ of $W_N$, the size of the set $\left\{ {W_N^{\left( i \right)}\left| {I\left( {W_N^{\left( i \right)}} \right) \approx 1,\;1 \le i \le N} \right.} \right\}$ approaches $N \cdot I\left(W\right)$, while the size of the set $\left\{ {W_N^{\left( i \right)}\left| {I\left( {W_N^{\left( i \right)}} \right) \approx 0,\;1 \le i \le N} \right.} \right\}$ approaches $N \cdot \left( 1 - I\left(W\right) \right)$. When sending data, only the good coordinate channels are employed, which are called information bits. The indices set of information bits are denoted as $\cal A$, whose size is denoted as $K$. The set of other indices is named as ${{\cal A}^c}$, on which the values are called frozen bits, denoted as ${u_{{{\cal A}^c}}}= \left( {{u_i}|i \in {{\cal A}^c}} \right)$. The frozen bits ${u_{{{\cal A}^c}}}$ are known by both sender and receiver. Hence polar codes are usually denoted as $\left( {N,K,{\cal A},{u_{{{\cal A}^c}}}} \right)$.

\begin{figure}[htbp]
\centering
\includegraphics[width=0.3\textwidth]{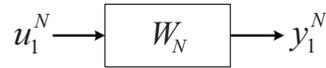}
\caption{\label{Fig-Polar-Code-Channel-Model} The illustration of the channel model of polar codes.}
\end{figure}
%

When decoding, SC decoder successively estimates the transmitted bits ${\widehat u_1^N}$ as follows,
\begin{eqnarray}
{\widehat u_i} = \left\{ {\begin{array}{*{20}{l}}
{{u_i},}&{{\rm{if}}\;i \in {{\cal A}^c}}\\
{0,}&{{\rm{if}}\;i \notin {{\cal A}^c}\;{\rm{and}}\;L_N^{\left( i \right)}\left( {y_1^N,\widehat u_1^{i - 1}} \right) \ge 1}\\
{1,}&{{\rm{if}}\;i \notin {{\cal A}^c}\;{\rm{and}}\;L_N^{\left( i \right)}\left( {y_1^N,\widehat u_1^{i - 1}} \right) < 1}
\end{array}} \right.
\label{eq-SC-determine}
\end{eqnarray}
, where
\begin{eqnarray}
L_N^{\left( i \right)}\left( {y_1^N,\widehat u_1^{i - 1}} \right) = \frac{{W_N^{\left( i \right)}\left( {\left. {y_1^N,\widehat u_1^{i - 1}} \right|0} \right)}}{{W_N^{\left( i \right)}\left( {\left. {y_1^N,\widehat u_1^{i - 1}} \right|1} \right)}}.
\label{eq-SC-L-N-i}
\end{eqnarray}
\eqref{eq-SC-L-N-i} can be straightforwardly calculated using the recursive formula
\begin{subequations}
\begin{small}
\begin{numcases}{L_N^{\left( i \right)}\left( {y_1^N,\hat u_1^{i - 1}} \right) = } \label{eq-SC-recursive-formula-LR-a}
{f\left( {L_{N/2}^{\left( {\left\lceil {i/2} \right\rceil } \right)}\left( {y_1^{N/2},\hat u_{1,o}^{i - 1} \oplus \hat u_{1,e}^{i - 1}} \right),L_{N/2}^{\left( {\left\lceil {i/2} \right\rceil } \right)}\left( {y_{N/2 + 1}^N,\hat u_{1,e}^{i - 1}} \right)} \right),} &${{\text{when}}\;i\;{\text{is}}\;{\text{odd}}}\;\;\;\;\;\;\;\;\;$ \\\label{eq-SC-recursive-formula-LR-b}
{g\left( {L_{N/2}^{\left( {\left\lceil {i/2} \right\rceil } \right)}\left( {y_1^{N/2},\hat u_{1,o}^{i - 1} \oplus \hat u_{1,e}^{i - 1}} \right),L_{N/2}^{\left( {\left\lceil {i/2} \right\rceil } \right)}\left( {y_{N/2 + 1}^N,\hat u_{1,e}^{i - 1}} \right),{{\hat u}_{i - 1}}} \right),}&${{\text{when}}\;i\;{\text{is}}\;{\text{even}}}$
\end{numcases}
\label{eq-SC-recursive-formula-LR}
\end{small}
\end{subequations}
, where $f\left( {a,b} \right) = \frac{{a \cdot b + 1}}{{a + b}}\;\;\;\text{and}\;\;\;g\left( {a,b,s} \right) = {a^{1 - 2s}} \cdot b$. We name $L_N^{\left( i \right)}\left( {y_1^N,\hat u_1^{i - 1}} \right)$ as the $i^{\rm{th}}$ LR at length $N$. Its calculation is recursively converted into the calculations of the two LRs at length $N/2$, and the recursion is continued down to the calculations of the $N$ LRs at length 1, i.e.
\begin{eqnarray}
L_1^{\left( 1 \right)}\left( {{y_i}} \right) = \frac{{W\left( {\left. {{y_i}} \right|0} \right)}}{{W\left( {\left. {{y_i}} \right|1} \right)}}, \;\;1 \le i \le N,
\label{eq-channel-LR}
\end{eqnarray}
which can be computed immediately according to the output vector $y_1^N$.

Another two recursive formulas similar to \eqref{eq-SC-recursive-formula-LR} are shown in \eqref{eq-SC-recursive-formula-LLR} and \eqref{eq-SC-recursive-formula-min-sum}, where $F\left( {a,b} \right) = 2{\mathop{\rm arctanh}\nolimits} \left( {\tanh \left( {a/2} \right) \cdot \tanh \left( {b/2} \right)} \right)$, $G\left( {a,b,s} \right) = {\left( { - 1} \right)^s}a + b$, and $\mathbb{F}\left( {a,b} \right) = {\mathop{\rm sgn}} \left( a \right) \cdot {\mathop{\rm sgn}} \left( b \right) \cdot \min \left\{ {\left| a \right|,\left| b \right|} \right\}$. These two recursive formulas are both based on logarithm likelihood ratio (LLR), which is employed frequently by kinds of decoders, because it is always superior to the LR in terms of hardware utilization, computational complexity, and numerical stability\cite{zhang2013low}. \eqref{eq-SC-recursive-formula-min-sum} is also known as min-sum update rule, which further simplifies the implementations of the hyperbolic tangent function and its inverse function in \eqref{eq-SC-recursive-formula-LLR}. Although the recursive formulas \eqref{eq-SC-recursive-formula-LR}, \eqref{eq-SC-recursive-formula-LLR} and \eqref{eq-SC-recursive-formula-min-sum} are different, the dependencies among nodes are the same if we regard a LR or LLR as a node. Without loss of generality, we employ the recursive formula \eqref{eq-SC-recursive-formula-LR} to depict our idea.

\begin{subequations}
\begin{small}
\begin{numcases}{\mathbb{L}_N^{\left( i \right)}\left( {y_1^N,\hat u_1^{i - 1}} \right) = } \label{eq-SC-recursive-formula-LLR-a}
{F\left( {\mathbb{L}_{N/2}^{\left( {\left\lceil {i/2} \right\rceil } \right)}\left( {y_1^{N/2},\hat u_{1,o}^{i - 1} \oplus \hat u_{1,e}^{i - 1}} \right),\mathbb{L}_{N/2}^{\left( {\left\lceil {i/2} \right\rceil } \right)}\left( {y_{N/2 + 1}^N,\hat u_{1,e}^{i - 1}} \right)} \right),} &${{\text{when}}\;i\;{\text{is}}\;{\text{odd}}}\;\;\;\;\;\;\;\;\;$ \\\label{eq-SC-recursive-formula-LLR-b}
{G\left( {\mathbb{L}_{N/2}^{\left( {\left\lceil {i/2} \right\rceil } \right)}\left( {y_1^{N/2},\hat u_{1,o}^{i - 1} \oplus \hat u_{1,e}^{i - 1}} \right),\mathbb{L}_{N/2}^{\left( {\left\lceil {i/2} \right\rceil } \right)}\left( {y_{N/2 + 1}^N,\hat u_{1,e}^{i - 1}} \right),{{\hat u}_{i - 1}}} \right),}&${{\text{when}}\;i\;{\text{is}}\;{\text{even}}}$
\end{numcases}
\label{eq-SC-recursive-formula-LLR}
\end{small}
\end{subequations}

\begin{subequations}
\begin{small}
\begin{numcases}{\mathbb{L}_N^{\left( i \right)}\left( {y_1^N,\hat u_1^{i - 1}} \right) = } \label{eq-SC-recursive-formula-min-sum-a}
{\mathbb{F}\left( {\mathbb{L}_{N/2}^{\left( {\left\lceil {i/2} \right\rceil } \right)}\left( {y_1^{N/2},\hat u_{1,o}^{i - 1} \oplus \hat u_{1,e}^{i - 1}} \right),\mathbb{L}_{N/2}^{\left( {\left\lceil {i/2} \right\rceil } \right)}\left( {y_{N/2 + 1}^N,\hat u_{1,e}^{i - 1}} \right)} \right),} &${{\text{when}}\;i\;{\text{is}}\;{\text{odd}}}\;\;\;\;\;\;\;\;\;$ \\\label{eq-SC-recursive-formula-min-sum-b}
{G\left( {\mathbb{L}_{N/2}^{\left( {\left\lceil {i/2} \right\rceil } \right)}\left( {y_1^{N/2},\hat u_{1,o}^{i - 1} \oplus \hat u_{1,e}^{i - 1}} \right),\mathbb{L}_{N/2}^{\left( {\left\lceil {i/2} \right\rceil } \right)}\left( {y_{N/2 + 1}^N,\hat u_{1,e}^{i - 1}} \right),{{\hat u}_{i - 1}}} \right),}&${{\text{when}}\;i\;{\text{is}}\;{\text{even}}}$
\end{numcases}
\label{eq-SC-recursive-formula-min-sum}
\end{small}
\end{subequations}

According to \eqref{eq-SC-determine} and \eqref{eq-SC-recursive-formula-LR}, in order to estimate ${\hat u_i}$, the computation of the $i^{th}$ LR at length $N$ is firstly activated, which in turn activates the two LRs at length $N/2$. The two LRs at length $N/2$ then activate the four LRs at length $N/4$, which activate the eight LRs at length $N/8$. The process continues till the LRs at certain length, assumed as $N/2^k$, have been estimated. Then the computation is sequentially performed back from length $N/2^{k-1}$ to length $N$, and ${\hat u_i}$ is determined according to \eqref{eq-SC-determine}. In other words, SC decoder achieves the maximized sharing on the calculations of LRs by a recursive way. We name the recursive way as implicitly maximized sharing, because it can not immediately recognize which of LRs could be shared. To achieve an explicitly maximized sharing, the existed implementations firstly calculate the decoding schedule offline by certain algorithm, then store it in the ROM. Different from them, the proposed algorithm obtains the decoding schedule on the fly without introducing any extra delay, which owes to a new-found factor $z_i$.

\section{\label{sec-Derive theoretically from recursive formula}Dependencies among LRs}
To achieve an explicitly maximized sharing on the calculations of LRs, we firstly probe into the recursive formula \eqref{eq-SC-recursive-formula-LR} to explore the dependencies among LRs. It is obvious that there are three operations performing on the decoded bits $\widehat u_1^{i - 1}$ in \eqref{eq-SC-recursive-formula-LR}. They are the XOR between the subvectors with odd indices and even indices, the EXTRACTION of the subvector with even indices, and the EXTRACTION of the last element
, i.e.
\begin{eqnarray}
\begin{array}{*{20}{l}}
{p\left( {\hat u_1^{i - 1}} \right)}& = &{\hat u_{1,o}^{i - 1} \oplus \hat u_{1,e}^{i - 1}}& = &{\hat u_{1,o}^{2\left\lfloor {\frac{{i - 1}}{2}} \right\rfloor } \oplus \hat u_{1,e}^{2\left\lfloor {\frac{{i - 1}}{2}} \right\rfloor }}\\[1mm]
{q\left( {\hat u_1^{i - 1}} \right)}& = &{\hat u_{1,e}^{i - 1}}& = &{\hat u_{1,e}^{2\left\lfloor {\frac{{i - 1}}{2}} \right\rfloor }}\\[1mm]
{r\left( {\hat u_1^{i - 1}} \right)}& = &{\hat u_{i - 1}^{}}.&{}&{}
\end{array}
\label{eq-pqr-operators-definition}
\end{eqnarray}
By these three operations, we deduce which of LRs are used during the calculation of the $i^{\rm{th}}$ LR at length $N$.
\begin{lem} \label{lem-LR-N-i} For any given $1 \le k\le n$, the calculation of the $i^{\rm{th}}$ LR at length $N$, $L_N^{\left( i \right)}\left( {y_1^N,\hat u_1^{i - 1}} \right)$, depends on the calculations of $2^k$ LRs at length ${N \mathord{\left/{\vphantom {N {{2^k}}}} \right.
 \kern-\nulldelimiterspace} {{2^k}}}$,
\begin{eqnarray}
L_{{N \mathord{\left/
 {\vphantom {N {{2^k}}}} \right.
 \kern-\nulldelimiterspace} {{2^k}}}}^{\left( {\left\lceil {{i \mathord{\left/
 {\vphantom {i {{2^k}}}} \right.
 \kern-\nulldelimiterspace} {{2^k}}}} \right\rceil } \right)}\left( {y_{\left( {j - 1} \right) \cdot {N \mathord{\left/
 {\vphantom {N {{2^k}}}} \right.
 \kern-\nulldelimiterspace} {{2^k}}} + 1}^{j \cdot {N \mathord{\left/
 {\vphantom {N {{2^k}}}} \right.
 \kern-\nulldelimiterspace} {{2^k}}}},{h_{j,k}}\left( {\hat u_1^{i - 1}} \right)} \right),\;\;\;1 \leqslant j \leqslant {2^k},
 \label{eq-ith-LR-N-dependent-N2k}
\end{eqnarray}
where ${h_{j,k}}$ is a composite function of functions $p$ and $q$. Specifically, ${h_{j,k}} = {\theta_k} \circ {\theta_{k - 1}} \circ  \cdots  \circ {\theta_2} \circ {\theta_1}$, where
\begin{eqnarray}
{\theta_a} = \left\{ {\begin{array}{*{20}{l}}
  {p,}&{{\text{when}}\;{b_{k - a + 1}} = 0} \\
  {q,}&{{\text{when}}\;{b_{k - a + 1}} = 1}
\end{array}} \right.,\; 1 \le a \le k,
\end{eqnarray}
and ${b_k}{b_{k - 1}} \cdots {b_2}{b_1}$ is the binary expansion of the integer $j-1$.
\end{lem}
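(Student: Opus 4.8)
The plan is to induct on $k$. The base case $k=1$ is immediate from the recursive formula \eqref{eq-SC-recursive-formula-LR}: whether $i$ is odd or even, the $i^{\rm{th}}$ LR at length $N$ is obtained from the two LRs $L_{N/2}^{\left(\left\lceil i/2\right\rceil\right)}\!\left(y_1^{N/2},p\!\left(\hat u_1^{i-1}\right)\right)$ and $L_{N/2}^{\left(\left\lceil i/2\right\rceil\right)}\!\left(y_{N/2+1}^N,q\!\left(\hat u_1^{i-1}\right)\right)$, with $p,q$ as in \eqref{eq-pqr-operators-definition}. These are exactly the cases $j=1$ (binary expansion of $j-1=0$ has single bit $b_1=0$, so $h_{1,1}=\theta_1=p$) and $j=2$ ($b_1=1$, so $h_{2,1}=\theta_1=q$), and their $y$-subvectors are the two halves of $y_1^N$, so \eqref{eq-ith-LR-N-dependent-N2k} holds for $k=1$.

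For the inductive step, assume the claim for some $k$ with $1\le k\le n-1$, so $L_N^{\left(i\right)}\!\left(y_1^N,\hat u_1^{i-1}\right)$ depends on the $2^k$ LRs $L_{N/2^k}^{\left(\left\lceil i/2^k\right\rceil\right)}\!\left(y_{\left(j-1\right)N/2^k+1}^{jN/2^k},h_{j,k}\!\left(\hat u_1^{i-1}\right)\right)$, $1\le j\le 2^k$. I would apply the base case, now at length $N/2^k$, to each of them: the $j^{\rm{th}}$ one depends on two LRs at length $N/2^{k+1}$, carrying index $\left\lceil\left\lceil i/2^k\right\rceil/2\right\rceil$, living on the first and second halves of the block $y_{\left(j-1\right)N/2^k+1}^{jN/2^k}$, with decoded-bit arguments $p\!\left(h_{j,k}\!\left(\hat u_1^{i-1}\right)\right)$ and $q\!\left(h_{j,k}\!\left(\hat u_1^{i-1}\right)\right)$. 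Three routine identities then finish the step: (i) $\left\lceil\left\lceil i/2^k\right\rceil/2\right\rceil=\left\lceil i/2^{k+1}\right\rceil$, the iterated-ceiling identity for powers of two, so every resulting LR carries the index demanded at level $k+1$; (ii) the two halves of the $j^{\rm{th}}$ length-$N/2^k$ block are the $\left(2j-1\right)^{\rm{th}}$ and $\left(2j\right)^{\rm{th}}$ length-$N/2^{k+1}$ blocks, so as $j$ runs over $\left\{1,\dots,2^k\right\}$ the new block index $j'$ runs over $\left\{1,\dots,2^{k+1}\right\}$; and (iii) $j'\in\{2j-1,2j\}$ means $j'-1\in\{2(j-1),2(j-1)+1\}$, whose binary expansion is that of $j-1$ with a new least significant bit $0$ or $1$ appended --- and by the definition of $\theta_a$ this new bit is precisely the one read by the fresh outermost factor $\theta_{k+1}$, while $\theta_1,\dots,\theta_k$ are unchanged, so $h_{2j-1,k+1}=p\circ h_{j,k}$ and $h_{2j,k+1}=q\circ h_{j,k}$, matching the decoded-bit arguments found above.

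I expect the only real obstacle to be the indexing in point (iii): one must check that the $\theta$-indexing convention of the statement (where $\theta_a$ reads bit $b_{k-a+1}$, so the innermost factor $\theta_1$ reads the most significant bit and the outermost $\theta_k$ the least significant) is consistent with the fact that each recursion step appends a new least significant bit to $j-1$ and a new outermost $p$/$q$ factor --- an explicit check of which bit each newly introduced factor corresponds to suffices. It is also worth noting in passing that $p$ and $q$ each map a length-$(m-1)$ vector to a length-$(\left\lceil m/2\right\rceil-1)$ vector, so $h_{j,k}\!\left(\hat u_1^{i-1}\right)$ has length $\left\lceil i/2^k\right\rceil-1$, exactly the number of decoded bits consumed by an LR at length $N/2^k$ with index $\left\lceil i/2^k\right\rceil$, which guarantees every expression in \eqref{eq-ith-LR-N-dependent-N2k} is well formed.
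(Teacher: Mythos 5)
Your proposal is correct and follows essentially the same route as the paper's own proof: induction on $k$, with the base case read off from \eqref{eq-SC-recursive-formula-LR} and the inductive step obtained by applying that recursion to each length-$N/2^k$ LR, tracking the halved observation blocks, the index identity $\left\lceil\left\lceil i/2^k\right\rceil/2\right\rceil=\left\lceil i/2^{k+1}\right\rceil$, and the fact that $h_{2j-1,k+1}=p\circ h_{j,k}$, $h_{2j,k+1}=q\circ h_{j,k}$ corresponds to appending a least significant bit to $j-1$. Your explicit check of the $\theta$-indexing convention and of the vector lengths only makes explicit what the paper leaves implicit, so no gap remains.
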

\begin{proof}
Please refer to Appendix \ref{appendix-Proof of Lema LRNi}.
\end{proof}

\eqref{eq-ith-LR-N-dependent-N2k} indicates that the LRs at length $N/2^k$ depended by two diverse LRs at length $N$ are different in two items. One is $\left\lceil {i/{2^k}} \right\rceil$, and the other is ${h_{j,k}}\left( {\hat u_1^{i - 1}} \right)$. It is obvious that
\begin{eqnarray}
\left\lceil {i/{2^k}} \right\rceil  \equiv m,\;{\rm{for}}\;\forall \left( {m - 1} \right){2^k} + 1 \le i \le m{2^k}.
\label{eq-i2k-m2k}
\end{eqnarray}
Then how about ${h_{j,k}}\left( {\hat u_1^{i - 1}} \right)$?

\begin{lem} \label{lem-h_j_k} For any given $1 \le k \le n$ and $1 \leqslant j \leqslant {2^k}$,  ${h_{j,k}}\left( {\hat u_1^i} \right)$ is a vector with the length of $\left\lfloor {{i \mathord{\left/ {\vphantom {i {{2^k}}}} \right. \kern-\nulldelimiterspace} {{2^k}}}} \right\rfloor$, denoted as $\left( {{v_1},{v_2}, \cdots ,{v_{\left\lfloor {{i \mathord{\left/ {\vphantom {i {{2^k}}}} \right. \kern-\nulldelimiterspace} {{2^k}}}} \right\rfloor }}} \right)$.  Any element $v_a$ is estimated as follows,
\[{v_a} = \mathop  \oplus \limits_{d \in {D_{j,k,a}}} {{\hat u}_d},\;\;1 \le a \le \left\lfloor {i/{2^k}} \right\rfloor,\]
where
\begin{eqnarray}
{D_{j,k,a}} = \left\{ {d\left| {d = \left( {a - 1} \right) \cdot {2^k} + 1 + {c_k}{c_{k - 1}} \cdots {c_1}} \right.} \right\},
\label{eq-hjk-Djka}
\end{eqnarray}
\begin{eqnarray}
{c_t} = \left\{ {\begin{array}{*{20}{c}}
  {?,}&{when\;{b_{k - t + 1}} = 0} \\
  {1,}&{when\;\;{b_{k - t + 1}} = 1}
\end{array}} \right., 1 \le t \le k
\label{eq-hjk_ci}
\end{eqnarray}
$c_t=?$ indicates that $c_t$ can be 0 and 1, and ${b_k}{b_{k - 1}} \cdots {b_2}{b_1}$ is the binary expansion of the integer $j-1$.
\end{lem}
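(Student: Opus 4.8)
I would prove Lemma~\ref{lem-h_j_k} by induction on $k$, peeling the outermost factor $\theta_k$ off the composition $h_{j,k}=\theta_k\circ\theta_{k-1}\circ\cdots\circ\theta_1$. First I would note that the operators $p$ and $q$ of \eqref{eq-pqr-operators-definition}, though written there as acting on $\hat u_1^{i-1}$, act on an arbitrary row vector $w=(w_1,\dots,w_m)$ in the obvious way: $p(w)$ and $q(w)$ are the length-$\lfloor m/2\rfloor$ vectors whose $a$-th entries are $w_{2a-1}\oplus w_{2a}$ and $w_{2a}$, respectively. Since each of $p,q$ halves the length (with a floor), $h_{j,k}(\hat u_1^i)$ has length $\lfloor i/2^k\rfloor$ by iterating the nested-floor identity $\lfloor\lfloor x/2\rfloor/2\rfloor=\lfloor x/4\rfloor$, so the length claim follows automatically along the induction.

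For the base case $k=1$ we have $h_{j,1}=\theta_1$, which is $p$ when $j=1$ (so $b_1=0$) and $q$ when $j=2$ (so $b_1=1$); reading off the entries gives $v_a=\hat u_{2a-1}\oplus\hat u_{2a}$ and $v_a=\hat u_{2a}$, which are exactly $\bigoplus_{d\in D_{j,1,a}}\hat u_d$ for $D_{1,1,a}=\{2a-1,2a\}$ (with wildcard $c_1=?$) and $D_{2,1,a}=\{2a\}$ (with fixed $c_1=1$), matching \eqref{eq-hjk-Djka}--\eqref{eq-hjk_ci}.

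For the inductive step I would write $j-1=b_kb_{k-1}\cdots b_1$ in binary and set $j'-1=\lfloor(j-1)/2\rfloor$, so the $(k-1)$-bit expansion of $j'-1$ is $b_kb_{k-1}\cdots b_2$; a one-line comparison of the bit conditions in Lemma~\ref{lem-LR-N-i} and in \eqref{eq-hjk_ci} then shows that $\theta_a$ is the same operator at levels $k$ and $k-1$ for $1\le a\le k-1$, and that $c_t$ is unchanged for $1\le t\le k-1$, whence $h_{j,k}=\theta_k\circ h_{j',k-1}$. Applying the induction hypothesis, $h_{j',k-1}(\hat u_1^i)$ is the length-$\lfloor i/2^{k-1}\rfloor$ vector whose $a'$-th entry is $\bigoplus_{d\in D_{j',k-1,a'}}\hat u_d$ with $D_{j',k-1,a'}=\{(a'-1)2^{k-1}+1+c_{k-1}\cdots c_1\}$. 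I would then split on $\theta_k$. If $b_1=0$ then $\theta_k=p$ and $v_a=[h_{j',k-1}(\hat u_1^i)]_{2a-1}\oplus[h_{j',k-1}(\hat u_1^i)]_{2a}$, so the contributing indices form $D_{j',k-1,2a-1}\cup D_{j',k-1,2a}$; since $(2a-2)2^{k-1}=(a-1)2^k$ and $(2a-1)2^{k-1}=(a-1)2^k+2^{k-1}$, this union is exactly $\{(a-1)2^k+1+c_kc_{k-1}\cdots c_1\}$ with the new most-significant bit $c_k$ free, i.e.\ $D_{j,k,a}$ in the case $c_k=?$. If $b_1=1$ then $\theta_k=q$ and $v_a=[h_{j',k-1}(\hat u_1^i)]_{2a}$, whose single index is $(a-1)2^k+2^{k-1}+1+c_{k-1}\cdots c_1$, i.e.\ $D_{j,k,a}$ in the case $c_k=1$. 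In both cases the result agrees with \eqref{eq-hjk-Djka}--\eqref{eq-hjk_ci}, closing the induction.

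The hard part will not be any individual calculation — each is a one-step unwinding of the definitions of $p$ and $q$ — but the bookkeeping of three interlocking indexings: the operator index $a$ in $\theta_a$, the digit position in $j-1$ (noting that $\theta_a$ and $c_t$ are governed by $b_{k-a+1}$ and $b_{k-t+1}$), and the shift $b'_s=b_{s+1}$ when passing from level $k$ to level $k-1$. In particular I would be careful to check that peeling $\theta_k$ inserts the new bit in the \emph{most}-significant slot $c_k$ of the offset, so that the $k$-bit string in \eqref{eq-hjk-Djka} is built in the order $c_kc_{k-1}\cdots c_1$, and that the wildcard-versus-fixed status of $c_k$ is tied to $\theta_k=p$ versus $\theta_k=q$ precisely as \eqref{eq-hjk_ci} dictates at $t=k$.
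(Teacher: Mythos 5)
Your proposal is correct and follows essentially the same route as the paper's proof: induction on $k$, using the factorization of $h_{j,k}$ through the outermost operator ($h_{j,k}=\theta_k\circ h_{j',k-1}$, which the paper states in the ascending form $h_{l,m+1}=p\circ h_{j,m}$ or $q\circ h_{j,m}$ borrowed from the proof of Lemma \ref{lem-LR-N-i}), splitting on whether that operator is $p$ or $q$, and showing that the index sets merge as $D_{j',k-1,2a-1}\cup D_{j',k-1,2a}$ (wildcard $c_k=?$) or restrict to $D_{j',k-1,2a}$ (fixed $c_k=1$), with the same floor-halving argument for the length. The only cosmetic difference is that you rederive the recursion directly from the bit-indexing in Lemma \ref{lem-LR-N-i} (via $j'-1=\left\lfloor (j-1)/2\right\rfloor$) rather than citing the paper's equation \eqref{eq-h-l-m+1}, and you write out the $q$ case that the paper leaves as ``similar.''
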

\begin{proof}
Please refer to Appendix \ref{appendix-Proof of Lema-hjk}.
\end{proof}

Lema \ref{lem-h_j_k} shows that the vector ${h_{j,k}}\left( {\hat u_1^{i-1}} \right)$ is determined by the values of $j$, $k$ and $\left\lfloor {\left(i-1\right)/{2^k}} \right\rfloor$. Since $\left\lfloor {\left( {i - 1} \right)/{2^k}} \right\rfloor  \equiv m - 1$ for all $\left( {m - 1} \right){2^k} + 1 \le i \le m{2^k}$, we have
\begin{eqnarray}
{h_{j,k}}\left( {\hat u_1^{i - 1}} \right) \equiv {h_{j,k}}\left( {\hat u_1^{\left( {m - 1} \right) \cdot {2^k}}} \right),\;{\rm{for}}\;\forall \left( {m - 1} \right){2^k} + 1 \le i \le m{2^k}.
\label{eq-hjk-i-m}
\end{eqnarray}
Combining \eqref{eq-i2k-m2k}, \eqref{eq-hjk-i-m} and Lema \ref{lem-LR-N-i}, it can be concluded that the $2^k$ LRs at length $N$,
\[L_N^{\left( i \right)}\left( {y_1^N,\hat u_1^{i - 1}} \right),\;\left( {m - 1} \right){2^k} + 1 \le i \le m{2^k},\]
share the same $2^k$ LRs at length ${{N \mathord{\left/ {\vphantom {N 2}} \right.  \kern-\nulldelimiterspace} 2}^k}$
\begin{eqnarray}
L_{{N \mathord{\left/
 {\vphantom {N {{2^k}}}} \right.
 \kern-\nulldelimiterspace} {{2^k}}}}^{\left( m \right)}\left( {y_{\left( {j - 1} \right) \cdot {N \mathord{\left/
 {\vphantom {N {{2^k}}}} \right.
 \kern-\nulldelimiterspace} {{2^k}}} + 1}^{j \cdot {N \mathord{\left/
 {\vphantom {N {{2^k}}}} \right.
 \kern-\nulldelimiterspace} {{2^k}}}},{h_{j,k}}\left( {\hat u_1^{\left( {m - 1} \right) \cdot {2^k}}} \right)} \right),\;1 \leqslant j \leqslant {2^k}.
 \label{eq-2k-N2k-sharedby-2kN}
\end{eqnarray}
According to the conclusion, we find a factor defined as Definition \ref{def-Sharing Factor}. It is the key to achieve an explicitly maximized sharing on the calculations of LRs, as shown in the following Theorem \ref{thm-LR-share-maximization}.

\begin{defn} \label{def-Sharing Factor} (\textbf{Sharing Factor}) For any given $1 \le i \le N$, its sharing factor is denoted as $z_i$, which is the number of the consecutive zeros in the end of the binary expansion of the integer $i-1$. It is noted that ${z_i}=n$ when $i=1$.
\end{defn}

In the view of the sharing factor $z_i$, $i-1$ can be rewritten as follows,
\begin{eqnarray}
i - 1 = \left\{ {\begin{array}{*{20}{l}}
  {0,}&{if\;i = 1} \\
  {{m_o} \cdot {2^{{z_i}}},}&{otherwise}
\end{array}} \right.
\label{eq-i-1-zi}
\end{eqnarray}
where $m_o$ is odd. Theorem \ref{thm-LR-share-maximization} details the function of $z_i$ on the explicitly maximized sharing on the calculations of LRs.



\begin{thm} \label{thm-LR-share-maximization} In SC decoder, when ${\hat u_i}$  is estimated,  only the LRs at length $N, {N \mathord{\left/
 {\vphantom {N 2}} \right.
 \kern-\nulldelimiterspace} 2}, \cdots ,{N \mathord{\left/
 {\vphantom {N {{2^{z_i}}}}} \right.
 \kern-\nulldelimiterspace} {{2^{z_i}}}}$ should be calculated, while the LRs at length $N/2^{z_i+1}$, $N/2^{z_i+2}, \cdots, 1$ can be shared, where ${z_i}$ is the sharing factor of $i$. Specifically, the calculations can be performed beginning with the LRs at length ${N \mathord{\left/
 {\vphantom {N {{2^{z_i}}}}} \right.
 \kern-\nulldelimiterspace} {{2^{z_i}}}}$, and in sequence till ending with the LR at length $N$.
\end{thm}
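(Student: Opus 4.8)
The plan is to read the theorem off the block-sharing statement \eqref{eq-2k-N2k-sharedby-2kN}, which already bundles Lemmas \ref{lem-LR-N-i} and \ref{lem-h_j_k} with \eqref{eq-i2k-m2k} and \eqref{eq-hjk-i-m}. Write $z=z_i$, so that by \eqref{eq-i-1-zi} one has $i-1=m_o\cdot 2^{z}$ with $m_o$ odd whenever $i>1$; the case $i=1$, where $z_1=n$, is trivial, since the decoder starts with nothing cached and therefore must produce every LR from length $1=N/2^{n}$ up to length $N$, which is exactly the claim. Model the SC decoder as retaining every LR it ever computes. By \eqref{eq-2k-N2k-sharedby-2kN}, estimating any $\hat u_{i'}$ touches, at each length $N/2^{k}$, precisely the $2^{k}$ LRs indexed by the level-$k$ block that contains $i'$; write $S(m,k)$ for this set, with $m=\lceil i'/2^{k}\rceil$, and note that $S(m,k)\cap S(m',k)=\emptyset$ for $m\neq m'$ because the superscript $m$ alone distinguishes these LRs. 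The theorem is thereby reduced to tracking, for each length, when the relevant block index first appears.

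The dichotomy hinges on the elementary fact that $2^{k}\mid(i-1)$ if and only if $k\le z$. For $k\le z$ one has $\lceil i/2^{k}\rceil=(i-1)/2^{k}+1$, and the first index of the level-$k$ block containing $i$ is $\big((i-1)/2^{k}\big)2^{k}+1=i$; hence $i$ is the first element of its level-$k$ block, so $\lceil i/2^{k}\rceil>\lceil i'/2^{k}\rceil$ for every $i'<i$. By the disjointness noted above, none of the $2^{k}$ LRs in $S(\lceil i/2^{k}\rceil,k)$ at length $N/2^{k}$ can have been computed while estimating $\hat u_1,\dots,\hat u_{i-1}$, so all of them are new when $\hat u_i$ is estimated: one LR at length $N$, two at length $N/2$, \dots, $2^{z}$ at length $N/2^{z}$. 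For $k\ge z+1$, instead, $2^{k}\nmid(i-1)$, so $i-1$ is not the last element of its level-$k$ block and consequently $i-1$ and $i$ lie in the \emph{same} level-$k$ block; by \eqref{eq-2k-N2k-sharedby-2kN} the LRs at length $N/2^{k}$ needed for $\hat u_i$ are exactly those needed for $\hat u_{i-1}$, which were already computed and are still cached. This shows that precisely the LRs at lengths $N,N/2,\dots,N/2^{z}$ must be calculated while all those at lengths $N/2^{z+1},\dots,N/2^{n}=1$ can be shared.

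The claimed order of computation is just the direction of the recursion \eqref{eq-SC-recursive-formula-LR}: each LR at length $N/2^{k}$ is an application of $f$ or $g$ to two LRs at length $N/2^{k+1}$ and to bits of $\hat u_1^{i-1}$, all of which are already decided. So, starting from the cached LRs at length $N/2^{z+1}$, one first forms all $2^{z}$ LRs at length $N/2^{z}$, then all $2^{z-1}$ LRs at length $N/2^{z-1}$, and so on up to the single LR $L_N^{(i)}$ at length $N$, after which \eqref{eq-SC-determine} yields $\hat u_i$.

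I expect the only genuine obstacle to be making rigorous the phrase ``the same LRs'': two LRs at length $N/2^{k}$ agree only when they carry the same superscript \emph{and} the same decoded-bit argument, so the reuse claims above need the argument $h_{j,k}(\hat u_1^{i'-1})$ to be constant across each level-$k$ block. This is exactly Lemma \ref{lem-h_j_k}, because there $h_{j,k}(\hat u_1^{i'-1})$ depends on $i'$ only through $\lfloor(i'-1)/2^{k}\rfloor$; moreover it is already absorbed into \eqref{eq-2k-N2k-sharedby-2kN}, so the write-up only needs to cite that display rather than redo the computation. Everything else is routine manipulation of binary expansions, floors, and ceilings.
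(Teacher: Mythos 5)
Your proof is correct and follows essentially the same route as the paper's: the same divisibility dichotomy $2^k \mid (i-1)$ if and only if $k \le z_i$, the same block-sharing statement \eqref{eq-2k-N2k-sharedby-2kN}, and the same bottom-up sequencing via \eqref{eq-SC-recursive-formula-LR}. Your extra observation that for $k \le z_i$ the index $i$ is the \emph{first} element of its level-$k$ block (so those LRs were never computed for any earlier index, not merely not shared with $\hat u_{i-1}$) slightly tightens the paper's argument, which compares only $i$ with $i-1$ and relegates the availability of the shared LRs to a footnote.
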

\begin{proof}
Please refer to Appendix \ref{appendix-Proof of Theorem-thm-LR-share-maximization}.
\end{proof}

\section{\label{sec-Proposed Algorithm to Obtain Decoding Schedule}Proposed Decoding Schedule Generating Algorithm}
For the estimation of ${{\hat u}_i}$, all required LRs at length $N/2^k$ are listed in \eqref{eq-2k-N2k-sharedby-2kN}. Hence we just need to determine which of formula $f$ and $g$ is employed to calculate these LRs. If the formula $f$ is used, these calculations are denoted as $f_{N/2^k}$, otherwise denoted as $g_{N/2^k}$. For the sake of brevity, $f_{N/2^k}$ and $g_{N/2^k}$ are both represented by $\gamma_{N/2^k}$, then
\begin{eqnarray}
\gamma_{N/2^k} = \left\{ {\begin{array}{*{20}{l}}
{f_{N/2^k},} & {{\text{when}}\;{ {\left\lceil {i/{2^{k}}} \right\rceil } }\;{\text{is}}\;{\text{odd}}} \\[1mm]
{g_{N/2^k},}& {{\text{when}}\;{ {\left\lceil {i/{2^{k}}} \right\rceil } }\;{\text{is}}\;{\text{even}}}
\end{array}} \right..
\label{eq-gamma-f-g}
\end{eqnarray}
According to Theorem \ref{thm-LR-share-maximization}, the decoding schedule for the estimation of ${{\hat u}_i}$ is $\gamma_{N/2^{z_i}}, \gamma_{N/2^{z_i-1}}, \cdots, \gamma_N$.

By employing the sharing factor $z_i$, we can further simplify the selection of $\gamma_{N/2^k}$ between $f_{N/2^k}$ and $g_{N/2^k}$. According to \eqref{eq-i-1-zi}, we have
\[\left\lceil {{i \mathord{\left/
 {\vphantom {i {{2^k}}}} \right.
 \kern-\nulldelimiterspace} {{2^k}}}} \right\rceil  = \left\{ {\begin{array}{*{20}{l}}
  {\left\lceil {{1 \mathord{\left/
 {\vphantom {1 {{2^k}}}} \right.
 \kern-\nulldelimiterspace} {{2^k}}}} \right\rceil ,}&{if\;i = 1} \\[1mm]
  {\left\lceil {{m_o} \cdot {2^{{z_i} - k}} + {1 \mathord{\left/
 {\vphantom {1 {{2^k}}}} \right.
 \kern-\nulldelimiterspace} {{2^k}}}} \right\rceil ,}&{otherwise}
\end{array}} \right.,\]
where $m_o$ is odd. Obviously, if $i=1$, then $\left\lceil {i/{2^k}} \right\rceil$ is always equal to 1 for all the $k$, otherwise it is even for $k = {z_i}$ and odd for $k < {z_i}$. Here the case of $k > z_i$ are not considered,  because Theorem \ref{thm-LR-share-maximization} shows that the LRs at length $N/2^k$, $k > z_i$, can be shared and need not be calculated. Hence the parity of $\left\lceil {i/{2^k}} \right\rceil$ can be determined as follows,
\begin{eqnarray}
\left\lceil {i/{2^k}} \right\rceil = \left\{ {\begin{array}{*{20}{l}}
  {even,}&{if\; i \ne 1 \; and \; k=z_i} \\
  {odd,}&{otherwise}
\end{array}} \right.,
\label{eq-Discriminant-formula-for-parity-m-3}
\end{eqnarray}
and the selection of $\gamma_{N/2^k}$ can be rewritten as follows,
\begin{eqnarray}
\gamma_{N/2^k} = \left\{ {\begin{array}{*{20}{l}}
  {g_{N/2^k},}&{if\; i \ne 1 \; and \; k=z_i} \\
  {f_{N/2^k},}&{otherwise}
\end{array}} \right. .
\end{eqnarray}
Another method to determine the selection of $\gamma_{N/2^k}$ was also presented in \cite{Leroux2013A,Le2015Multi}, i.e.
\begin{eqnarray}
\gamma_{N/2^k} = \left\{ {\begin{array}{*{20}{l}}
  {g_{N/2^k},}&{when\;\left( {i - 1} \right)\& {2^k} =  = 1} \\
  {f_{N/2^k},}&{when\;\left( {i - 1} \right)\& {2^k} =  = 0}
\end{array}} \right..
\label{eq-Discriminant-formula-for-parity-m-1}
\end{eqnarray}
In their method, for each $1 \le i \le N$, the selections should be performed for all $k$. While our method shows that, for $2 \le i \le N$, the formula $g$ is just employed one time, i.e. $k=z_i$, and for $i=1$, the formula $g$ does not be employed.

In a word, the proposed decoding schedule algorithm is summarized as follows. The SC decoder successively estimates the transmitted bits $\hat u_1^N$: for the estimation of $\hat u_1$, $f_1, f_2, f_4, \cdots, f_N$ are performed in sequence, and for the estimation of $\hat u_i$ ($i > 1$), $g_{N/2^{z_i}}, f_{N/2^{z_i-1}}, \cdots, f_N$ are performed in sequence.  An example of $N=8$ is shown in Table \ref{tbl-decoding-schedule-of-SC-decoder-for-polar-codes-with-N8}. The first line is clock cycle, the second line is the entries of the decoding schedule, and the third line is the output of ${\hat u}_i$.

\begin{table}[htbp]
\centering
\caption{\label{tbl-decoding-schedule-of-SC-decoder-for-polar-codes-with-N8}The decoding schedule of SC decoder for polar codes with $N=8$.}
\begin{tabular}{c|c|c|c|c|c|c|c|c|c|c|c|c|c|c|c}
\hline
        CC &          1 &          2 &          3 &          4 &          5 &          6 &          7 &          8 &          9 &         10 &         11 &         12 &         13 &         14 &         15 \\
\hline
        Entry &         $f_1$ &         $f_2$ &         $f_4$ &         $f_8$ &         $g_8$ &         $g_4$ &         $f_8$ &         $g_8$ &         $g_2$ &         $f_4$ &         $f_8$ &         $g_8$ &         $g_4$ &         $f_8$ &         $g_8$ \\
\hline
        ${\hat u}_i$ &        N/A &        N/A &        N/A &         ${\hat u}_1$ &         ${\hat u}_2$ &        N/A &         ${\hat u}_3$ &         ${\hat u}_4$ &        N/A &        N/A &         ${\hat u}_5$ &         ${\hat u}_6$ &        N/A &         ${\hat u}_7$ &         ${\hat u}_8$ \\
\hline
\end{tabular}
\end{table}

In order to generate the decoding schedule online, we would like to calculate $z_i$ on the fly. According to Bit Twiddling Hacks\footnote{http://graphics.stanford.edu/$\sim$seander/bithacks.html.}, the online calculation of $z_i$ is feasible. An illustration with a multiply and a lookup table is shown in Algorithm \ref{algo-An illustration of calculating zi}. The algorithm works for any input $2 \le i \le 2^{32}$. For the case $i=1$, $z_1$ is set to $n$. The codelength $N=2^{32}$ is enough for almost all practical polar codes, whose codelengths usually are about $2^{20}$ bits. Since the calculation of $z_i$ is so simple, its delay can be easily eliminated by packing it into the estimations of ${\hat u_{j}}$, ($j < i$). Hence $z_i$ can be calculated on the fly without introducing any extra delay. The proposed algorithm thereby generates decoding schedule online without introducing any extra delay.

\begin{algorithm}[h]
\caption{An illustration of calculating $z_i$}
\label{algo-An illustration of calculating zi}
\begin{algorithmic}[1]
\REQUIRE ~~~~ $i$;
\ENSURE ~~ $z_i$;

    \STATE static const int LT[32] = \{ 0, 1, 28, 2, 29, 14, 24, 3, 30, 22, 20, 15, 25, 17, 4, 8, 31, 27, 13, 23, 21, 19, 16, 7, 26, 12, 18, 6, 11, 5, 10, 9 \};\\
    \STATE int $v=i-1$;\\
    \STATE $z_i$ = LT[((uint32\_t)(($v$ \& -$v$) $*$ 0x077CB531)) $>>$ 27];\\

\end{algorithmic}
\end{algorithm}

\section{\label{sec-Comparisons}Comparisons}
To the best of our knowledge, there are two algorithms generating the decoding schedule of SC decoder, both of which calculate the decoding schedule offline and store it in the ROM. One is based on the depth-first traversal of the scheduling tree\cite{Alamdar2011A,huang2012latency,Sarkis2013Increasing,YouZhe2014An,Le2015Multi}, as shown in Figure \ref{Fig-Scheduling-tree}. The other is based on the recursive construction\cite{Chuan2012Reduced}, as shown in Algorithm \ref{algo-Recursive construction of decoding schedule}. Obviously, the proposed algorithm based on the sharing factor $z_i$ is more simple.

\begin{figure}[htbp]

\begin{minipage}[b]{0.45\linewidth}
  \centerline{\includegraphics[width=1.\textwidth]{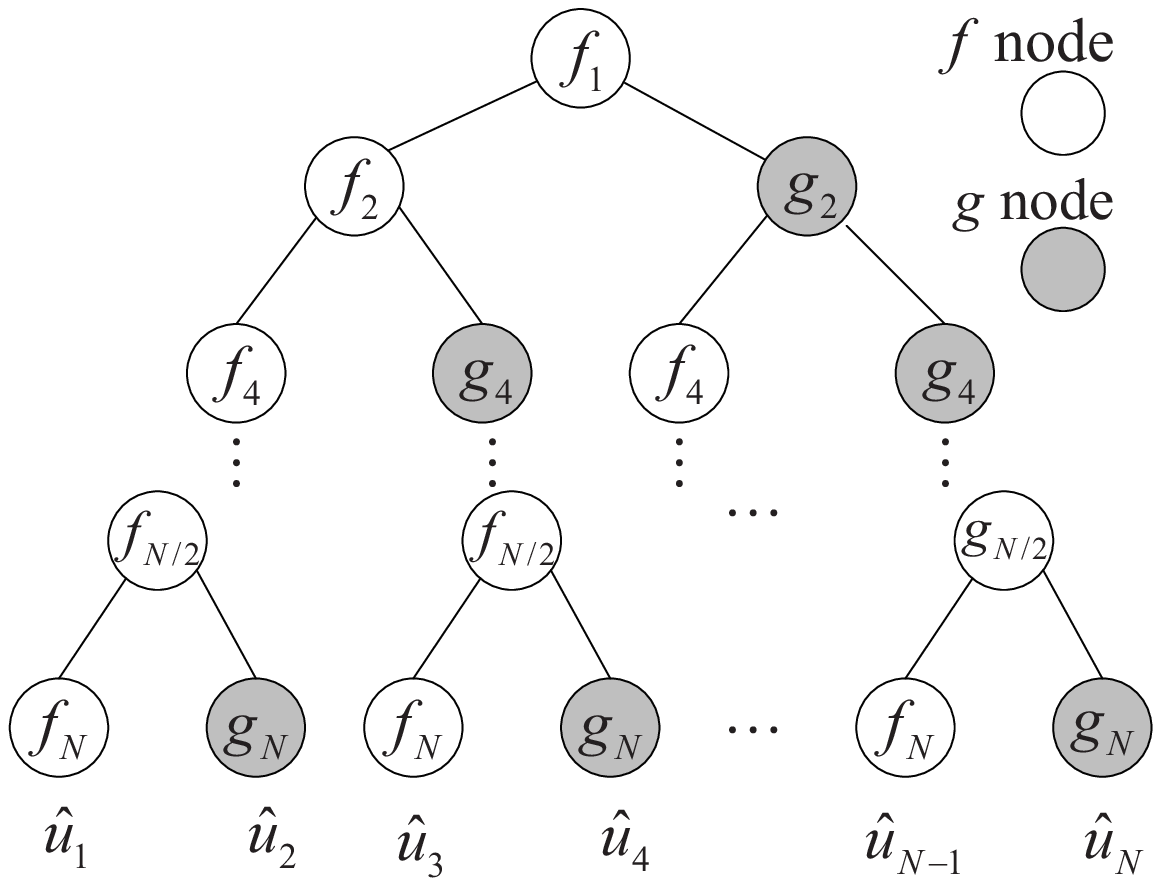}}
  \centerline{(a)}
\end{minipage}
\hfill
\begin{minipage}[b]{.45\linewidth}
  \centerline{\includegraphics[width=1.\textwidth]{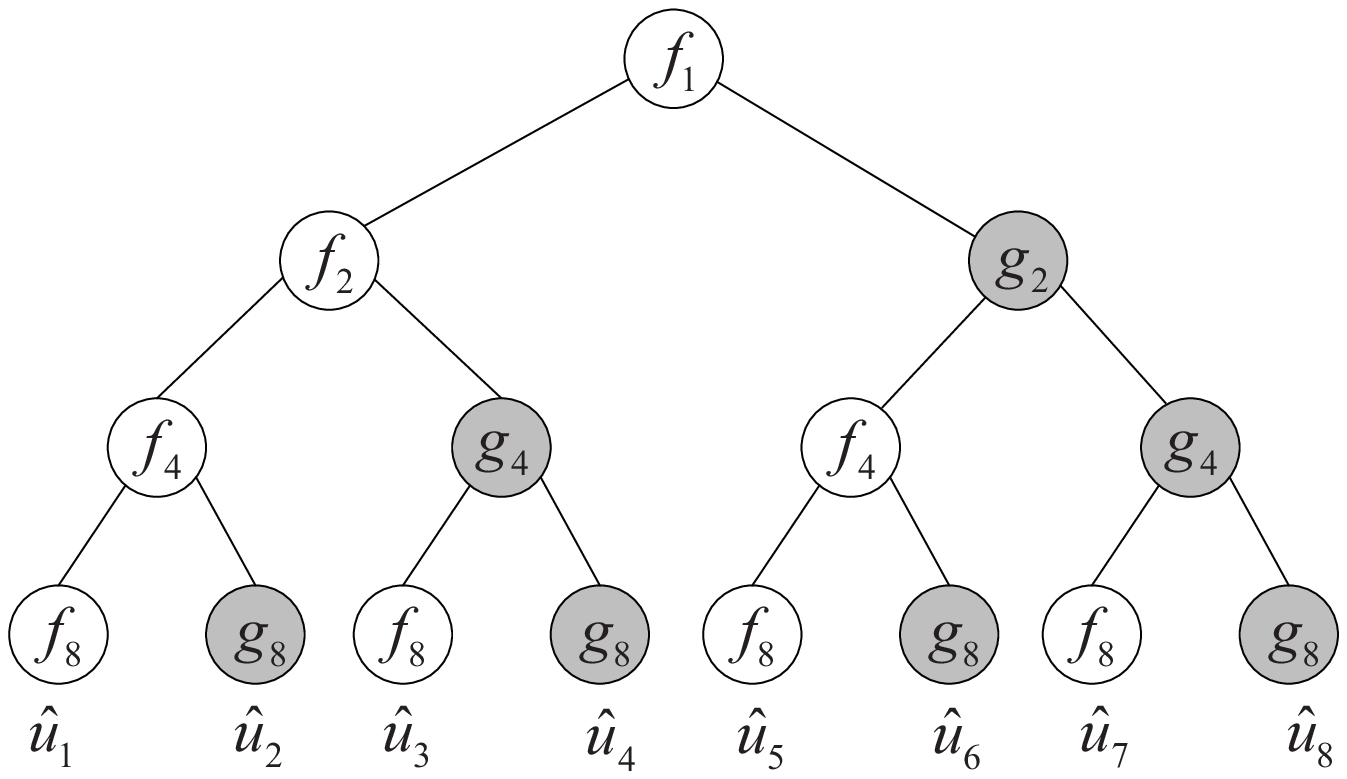}}
  \centerline{(b)}
\end{minipage}
\caption{\label{Fig-Scheduling-tree} Scheduling tree of the SC decoder for polar codes, whose depth-first traversal generates the decoding scheduling. (a) General scheduling tree. (b) Example of $N=8$.}
\end{figure}

\begin{algorithm}[htbp]
\caption{Recursive-construction based decoding schedule generating algorithm\cite{Chuan2012Reduced}}
\label{algo-Recursive construction of decoding schedule}
\begin{algorithmic}[1]
\REQUIRE ~~~~~  Codelength $N$;
\ENSURE ~~ Decoding schedule $DS$;

    \STATE $DS$ = $NULL$;\\
    \FOR {$i=n$, $i--$, 1}
        \STATE $DS1$ = [$f_{2^i}$, $DS$];\\
        \STATE $DS2$ = [$g_{2^i}$, $DS$];\\
        \STATE $DS$ = [$DS1$, $DS2$];\\
    \ENDFOR
    \STATE $DS = [f_1, DS]$;
    \STATE Output $DS$;\\

\end{algorithmic}
\end{algorithm}

We compare the three algorithms in terms of online (No/Yes), extra delay(No/Yes), and memory. The indicator of online measures whether the decoding schedule is generated on the fly or not. The indicator of extra delay measures whether the generation of decoding schedule introduces extra delay into the decoding process. The indicator of memory shows the number of the storage taken by the decoding schedule. The comparison results are listed in Table \ref{tbl-Comparison-for-different-decoding-schedule algorithms}.

\begin{table}[htbp]
\centering
\caption{\label{tbl-Comparison-for-different-decoding-schedule algorithms}Comparisons among different decoding schedule generating algorithms.}
\begin{tabular}{l|ccc}
\hline
           & Online & Extra Delay &     Memory (bit) \\
\hline
Scheduling Tree\cite{Alamdar2011A,huang2012latency,Sarkis2013Increasing,YouZhe2014An,Le2015Multi} &         No &         No &  $\left( {2N - 1} \right) {{{\log }_2}\left( {2n + 1} \right)} $          \\
\hline
Recursive Construction\cite{Chuan2012Reduced} &         No &         No &    $\left( {2N - 1} \right) {{{\log }_2}\left( {2n + 1} \right)} $        \\
\hline
  Proposed &        \textbf{Yes} &         No &      \textbf{160}      \\
\hline
\end{tabular}

\end{table}

Since the algorithms based on the scheduling tree and recursive construction both generate the decoding schedule offline and store it in the ROM, they are not online and do not introduce any extra delay. For the two existed algorithms, the required memory is equal to the product of the number of the entries of decoding schedule and the number of bits to represent each entry. It is obvious that the number of the entries of decoding schedule is the total nodes of schedule tree, i.e. $\sum\nolimits_{k = 0}^n {{2^k}}  = 2N - 1$. Each entry of decoding schedule can be represented at least by $ {{{\log }_2}\left( {2n + 1} \right)}$, because there are $2n+1$ different entries, i.e. $f_1,\cdots,f_{N/2}, f_{N}, g_2,\cdots, g_{N/2}, g_{N}$. Hence the memory taken by the two existed algorithms are both $\left( {2N - 1} \right) {{{\log }_2}\left( {2n + 1} \right)} $. If Algorithm \ref{algo-An illustration of calculating zi} is employed to calculate $z_i$, as mentioned above, the proposed algorithm can generate the decoding schedule on the fly without introducing any extra delay. Since only a lookup table needs to be stored during the running of the proposed algorithm, its required memory is constant, i.e. 160bits, which is far less than the memory taken by the two existed algorithms, especially for a large $N$. Usually, in order to achieve the channel capacity, the codelength of polar codes should be at least $2^{20}$ bits\cite{Leroux2013A,Yoo2015Partially}, i.e. $N \ge 2^{20}$.

\section{\label{sec-Conclusion}Conclusions}
Thanks to the new-found factor $z_i$, we have proposed a new decoding schedule generating algorithm, which is superior to the existed algorithms in two aspects. The first is that the existed algorithms are too complex, which are either based on the depth-first traversal of the scheduling tree or based on the recursive construction. While the proposed algorithm skillfully computes decoding schedule by the sharing factor $z_i$, which can be calculated easily with Bit Twiddling Hacks. The second is that the existed algorithms obtain decoding schedule offline and consume at least $\left( {2N - 1} \right) {{{\log }_2}\left( {2n + 1} \right)} $ bits to store it. While the proposed algorithm generates decoding schedule on the fly, and just requires 160 bits during the generation of the decoding schedule. These advantages are helpful to decrease the implementation complexity of SC decoder.

%

\section*{Acknowledgment}
The authors gratefully acknowledge Wu Xianyan, Sang Jianzhi and Mao Haokun from Harbin Institute of Technology for many useful discussions on this paper.
This work is supported by the National Natural Science Foundation of China (Grant Number:
61471141, 61361166006, 61301099) and the Fundamental Research Funds for the Central Universities (Grant Number:
HIT. KISTP. 201416, HIT. KISTP. 201414).

 \begin{appendices}
\section{Proof of Lema \ref{lem-LR-N-i}}
\label{appendix-Proof of Lema LRNi}
\begin{proof}

\textbf{\emph{Basis Step:}} We start with the case $k=1$. According to (\ref{eq-SC-recursive-formula-LR}), the calculation of the $i^{th}$ LR at length $N$, $L_N^{\left( i \right)}\left( {y_1^N,\hat u_1^{i - 1}} \right)$, depends on the calculations of two LRs at length ${N \mathord{\left/ {\vphantom {N 2}} \right. \kern-\nulldelimiterspace} 2}$ as follows,

\[L_{{N \mathord{\left/
 {\vphantom {N 2}} \right.
 \kern-\nulldelimiterspace} 2}}^{\left( {\left\lceil {{i \mathord{\left/
 {\vphantom {i 2}} \right.
 \kern-\nulldelimiterspace} 2}} \right\rceil } \right)}\left( {y_1^{{N \mathord{\left/
 {\vphantom {N 2}} \right.
 \kern-\nulldelimiterspace} 2}},p\left( {\hat u_1^{i - 1}} \right)} \right),\;\;L_{{N \mathord{\left/
 {\vphantom {N 2}} \right.
 \kern-\nulldelimiterspace} 2}}^{\left( {\left\lceil {{i \mathord{\left/
 {\vphantom {i 2}} \right.
 \kern-\nulldelimiterspace} 2}} \right\rceil } \right)}\left( {y_{{N \mathord{\left/
 {\vphantom {N 2}} \right.
 \kern-\nulldelimiterspace} 2} + 1}^N,q\left( {\hat u_1^{i - 1}} \right)} \right).\]
Let ${h_{1,1}} = p$ and ${h_{2,1}} = q$, then the lema for the case $k=1$ is true.

\textbf{\emph{Inductive Step:}} Now we assume the truth of the case $k=m$. That is that the calculation of the $i^{th}$ LR at length $N$ depends on the calculations of $2^m$ LRs at length ${N \mathord{\left/ {\vphantom {N 2}} \right. \kern-\nulldelimiterspace} {2^m}}$ as follows,
\begin{eqnarray}
L_{{N \mathord{\left/
 {\vphantom {N {{2^m}}}} \right.
 \kern-\nulldelimiterspace} {{2^m}}}}^{\left( {\left\lceil {{i \mathord{\left/
 {\vphantom {i {{2^m}}}} \right.
 \kern-\nulldelimiterspace} {{2^m}}}} \right\rceil } \right)}\left( {y_{\left( {j - 1} \right) \cdot {N \mathord{\left/
 {\vphantom {N {{2^m}}}} \right.
 \kern-\nulldelimiterspace} {{2^m}}} + 1}^{j \cdot {N \mathord{\left/
 {\vphantom {N {{2^m}}}} \right.
 \kern-\nulldelimiterspace} {{2^m}}}},{h_{j,m}}\left( {\hat u_1^{i - 1}} \right)} \right),\;\;\;1 \leqslant j \leqslant {2^m},
 \label{eq-L-N-2-m-LRs}
\end{eqnarray}
where
\begin{eqnarray}
{h_{j,m}} = {\theta_m} \circ {\theta_{m - 1}} \circ  \cdots  \circ {\theta_2} \circ {\theta_1}
\label{eq-h-j-m}
\end{eqnarray}
and
\begin{eqnarray}
{\theta_a} = \left\{ {\begin{array}{*{20}{l}}
  {p,}&{{\text{when}}\;{b_{m - a + 1}} = 0} \\
  {q,}&{{\text{when}}\;{b_{m - a + 1}} = 1}
\end{array}} \right., \; 1 \le a \le m.
\label{eq-f-i-m}
\end{eqnarray}
Here ${b_m}{b_{m - 1}} \cdots {b_2}{b_1}$ is the binary expansion of the integer $j-1$.
According to \eqref{eq-SC-recursive-formula-LR}, if $\left\lceil {{i \mathord{\left/ {\vphantom {i {{2^m}}}} \right. \kern-\nulldelimiterspace} {{2^m}}}} \right\rceil $ is odd, then each item in (\ref{eq-L-N-2-m-LRs}) is calculated as (\ref{eq-L-N-m+1-odd}), otherwise it is calculated as (\ref{eq-L-N-m+1-even}).

\begin{subequations}
\begin{small}
\begin{numcases}{{L_{N/{2^m}}^{\left( \left\lceil {i/{2^m}} \right\rceil \right)}\left( {y_{\left( {j - 1} \right) \cdot N/{2^m} + 1}^{j \cdot N/{2^m}},{h_{j,m}}\left( {\hat u_1^{i - 1}} \right)} \right)} = } \label{eq-L-N-m+1-odd}
{f\left( {\alpha, \beta} \right),} &${{\text{when}}\;{ {\left\lceil {i/{2^{m}}} \right\rceil } }\;{\text{is}}\;{\text{odd}}}\;\;\;\;\;\;\;\;\;\;\;\;\;$ \\\label{eq-L-N-m+1-even}
{g\left( {\alpha, \beta, r\left( {{h_{j,m}}\left( {\hat u_1^{i - 1}} \right)} \right)} \right),}&${{\text{when}}\;{ {\left\lceil {i/{2^{m}}} \right\rceil } }\;{\text{is}}\;{\text{even}}}$
\end{numcases}
\label{eq-L-N-m+1}
\end{small}
\end{subequations}
where
\[\begin{small}\begin{array}{l}
\alpha  = L_{N/{2^{m + 1}}}^{\left( {\left\lceil {i/{2^{m + 1}}} \right\rceil } \right)}\left( {y_{\left( {2j - 2} \right) \cdot N/{2^{m + 1}} + 1}^{\left( {2j - 1} \right) \cdot N/{2^{m + 1}}},p\left( {{h_{j,m}}\left( {\hat u_1^{i - 1}} \right)} \right)} \right), \;\;
\beta  = L_{N/{2^{m + 1}}}^{\left( {\left\lceil {i/{2^{m + 1}}} \right\rceil } \right)}\left( {y_{\left( {2j - 1} \right) \cdot N/{2^{m + 1}} + 1}^{ {2j}  \cdot N/{2^{m + 1}}},q\left( {{h_{j,m}}\left( {\hat u_1^{i - 1}} \right)} \right)} \right).
\end{array}\end{small}\]

According to \eqref{eq-L-N-m+1}, It is obvious that the calculation of a LR at length ${N \mathord{\left/ {\vphantom {N {{2^m}}}} \right. \kern-\nulldelimiterspace} {{2^m}}}$,
\[L_{{{{N \mathord{\left/ {\vphantom {N 2}} \right. \kern-\nulldelimiterspace} 2}}^m}}^{\left( {\left\lceil {{i \mathord{\left/ {\vphantom {i {{2^m}}}} \right. \kern-\nulldelimiterspace} {{2^m}}}} \right\rceil } \right)}\left( {y_{\left( {j - 1} \right) \cdot {{{N \mathord{\left/ {\vphantom {N 2}} \right. \kern-\nulldelimiterspace} 2}}^m} + 1}^{j \cdot {{{N \mathord{\left/ {\vphantom {N 2}} \right. \kern-\nulldelimiterspace} 2}}^m}},{h_{j,m}}\left( {\hat u_1^{i - 1}} \right)} \right),\]
depends on the calculations of two LRs at length ${N \mathord{\left/ {\vphantom {N {{2^{m + 1}}}}} \right. \kern-\nulldelimiterspace} {{2^{m + 1}}}}$ as follows,

\begin{small}
\[L_{{N \mathord{\left/
 {\vphantom {N {{2^{m + 1}}}}} \right.
 \kern-\nulldelimiterspace} {{2^{m + 1}}}}}^{\left( {\left\lceil {{i \mathord{\left/
 {\vphantom {i {{2^{m + 1}}}}} \right.
 \kern-\nulldelimiterspace} {{2^{m + 1}}}}} \right\rceil } \right)}\left( {y_{\left( {l - 1} \right) \cdot {N \mathord{\left/
 {\vphantom {N {{2^{m + 1}}}}} \right.
 \kern-\nulldelimiterspace} {{2^{m + 1}}}} + 1}^{l \cdot {N \mathord{\left/
 {\vphantom {N {{2^{m + 1}}}}} \right.
 \kern-\nulldelimiterspace} {{2^{m + 1}}}}},{h_{l,m + 1}}\left( {\hat u_1^{i - 1}} \right)} \right),\;l \in \left\{ {2j - 1,2j} \right\}\]
 \end{small}
, where
\begin{eqnarray}
{h_{l,m + 1}}\left( {\hat u_1^{i - 1}} \right) = \left\{ {\begin{array}{*{20}{l}}
  {p\left( {{h_{j,m}}\left( {\hat u_1^{i - 1}} \right)} \right),}&{{\text{when}}\;l = 2j - 1} \\
  {q\left( {{h_{j,m}}\left( {\hat u_1^{i - 1}} \right)} \right),}&{{\text{when}}\;l = 2j}
\end{array}} \right..
\label{eq-h-l-m+1}
\end{eqnarray}
That is
\[{h_{l,m + 1}} = \left\{ {\begin{array}{*{20}{l}}
  {p \circ {f_m} \circ  \cdots  \circ {f_2} \circ {f_1},}&{when\;l = 2j - 1} \\
  {q \circ {f_m} \circ  \cdots  \circ {f_2} \circ {f_1},}&{when\;l = 2j}
\end{array}} \right.\]

Since $j \in \left\{ {1,2, \cdots ,{2^m}} \right\}$, it can be inferred that $l \in \left\{ {1,2, \cdots ,{2^{m + 1}}} \right\}$ and
\begin{eqnarray}
l - 1 = \left\{ {\begin{array}{*{20}{l}}
  {{b_m}{b_{m - 1}} \cdots {b_1}0,}&{when\;l = 2j - 1} \\
  {{b_m}{b_{m - 1}} \cdots {b_1}1,}&{when\;l = 2j}
\end{array}} \right..
\label{eq-l-1-binary-expanse}
\end{eqnarray}
Hence the lema for the case $k=m+1$ is true.

Consequently, by the Principle of Finite Induction, the lema is proved.

\end{proof}

\section{Proof of Lema \ref{lem-h_j_k}}
\label{appendix-Proof of Lema-hjk}
\begin{proof}

\textbf{\emph{Basis Step:}} We start with the case $k=1$. In this case $j \in \left\{ {1,2} \right\}$, so we only need to consider ${h_{1,1}}\left( {\hat u_1^i} \right)$ and ${h_{2,1}}\left( {\hat u_1^i} \right)$. Since
\begin{eqnarray}
\begin{gathered}
  {h_{1,1}}\left( {\hat u_1^i} \right) = p\left( {\hat u_1^i} \right) = \hat u_{1,o}^{2\left\lfloor {{i \mathord{\left/
 {\vphantom {i 2}} \right.
 \kern-\nulldelimiterspace} 2}} \right\rfloor } \oplus \hat u_{1,e}^{2\left\lfloor {{i \mathord{\left/
 {\vphantom {i 2}} \right.
 \kern-\nulldelimiterspace} 2}} \right\rfloor } \hfill \\
  {h_{2,1}}\left( {\hat u_1^i} \right) = q\left( {\hat u_1^i} \right) = \hat u_{1,e}^{2\left\lfloor {{i \mathord{\left/
 {\vphantom {i 2}} \right.
 \kern-\nulldelimiterspace} 2}} \right\rfloor } \hfill \\
\end{gathered},
\label{eq-h11-h21}
\end{eqnarray}
it is obvious that their lengths are both $\left\lfloor {{i \mathord{\left/ {\vphantom {i 2}} \right. \kern-\nulldelimiterspace} 2}} \right\rfloor $. For any given $1 \le a \le \left\lfloor {i/2} \right\rfloor $, we have
\[\begin{gathered}
  \begin{array}{*{20}{l}}
  {{D_{1,1,a}}}& = &{\left\{ {2a - 1,2a} \right\}} \\
  {}& = &{\left\{ {d\left| {d = \left( {a - 1} \right) \cdot 2 + 1 + ?} \right.} \right\}}
\end{array} \hfill \\
  \begin{array}{*{20}{l}}
  {{D_{2,1,a}}}& = &{\left\{ {2a} \right\}} \\
  {}& = &{\left\{ {d\left| {d = \left( {a - 1} \right) \cdot 2 + 1 + 1} \right.} \right\}}.
\end{array} \hfill \\
\end{gathered}\]
Hence the lema for the case $k=1$ is true.

\textbf{\emph{Inductive Step:}} Now we assume the truth of the case $k=m$. Then we have
\[{h_{j,m}}\left( {\hat u_1^i} \right) = \left( {{v_1},{v_2}, \cdots ,{v_{{n_1}}}} \right),\;{n_1} = \left\lfloor {{i \mathord{\left/ {\vphantom {i {{2^m}}}} \right. \kern-\nulldelimiterspace} {{2^m}}}} \right\rfloor, \]
where
${v_a} = \mathop  \oplus \limits_{d \in {D_{j,m,a}}} {{\hat u}_d},$
\[{D_{j,m,a}} = \left\{ {d\left| {d = \left( {a - 1} \right) \cdot {2^m} + 1 + {c_m}{c_{m - 1}} \cdots {c_1}} \right.} \right\},\]
\[{c_t} = \left\{ {\begin{array}{*{20}{c}}
  {?,}&{when\;{b_{m - t + 1}} = 0} \\
  {1,}&{when\;\;{b_{m - t + 1}} = 1}
\end{array}} \right., 1 \le t \le m,\]
and ${b_m}{b_{m - 1}} \cdots {b_2}{b_1}$ is the binary expansion of the integer $j-1$.

\textbf{(a)} For any given $1 \leqslant l \leqslant {2^{m + 1}}$, when it is odd, according to (\ref{eq-h-l-m+1}) we have
\[{h_{l,m + 1}}\left( {\hat u_1^i} \right) = p\left( {v_1^{{n_1}}} \right) = w_1^{{n_2}}.\]
Then ${n_2} = \left\lfloor {{{{n_1}} \mathord{\left/  {\vphantom {{{n_1}} 2}} \right. \kern-\nulldelimiterspace} 2}} \right\rfloor  = \left\lfloor {{i \mathord{\left/ {\vphantom {i {{2^{m + 1}}}}} \right. \kern-\nulldelimiterspace} {{2^{m + 1}}}}} \right\rfloor $. For any element $w_a$, $1\le a \le n_2$, we have
\[\begin{array}{*{20}{l}}
  {{w_a}} = {{v_{2a - 1}} \oplus {v_{2a}}} = {\mathop  \oplus \limits_{d \in {D_{j,m,2a - 1}} \cup {D_{j,m,2a}}} {{\hat u}_d}}  \buildrel \Delta \over = \mathop  \oplus \limits_{d \in {D_{l,m + 1,a}}} {\hat u_d}
\end{array}\]
and
\[\begin{small}\begin{array}{*{20}{l}}
{} & {D_{l,m + 1,a}}\\
{} =& {D_{j,m,2a - 1}} \cup {D_{j,m,2a}}\\
  {} =& \left\{ {d\left| {d = \left( {2a - 2} \right) \cdot {2^m} + 1 + {c_m}{c_{m - 1}} \cdots {c_1}} \right.} \right\} \cup \left\{ {d\left| {d = \left( {2a - 1} \right) \cdot {2^m} + 1 + {c_m}{c_{m - 1}} \cdots {c_1}} \right.} \right\}  \\
 {} =& \left\{ {d\left| {d = \left( {a - 1} \right) \cdot {2^{m + 1}} + 1 + 0{c_m}{c_{m - 1}} \cdots {c_1}} \right.} \right\} \cup \left\{ {d\left| {d = \left( {a - 1} \right) \cdot {2^{m + 1}} + 1 + 1{c_m}{c_{m - 1}} \cdots {c_1}} \right.} \right\} \\
 {} =& \left\{ {d\left| {d = \left( {a - 1} \right) \cdot {2^{m + 1}} + 1 + ?{c_m}{c_{m - 1}} \cdots {c_1}} \right.} \right\}.
\end{array}\end{small}\]
Hence the lema for the case that $k=m+1$ and $l$ is odd is true.

\textbf{(b)} In a similar way, the case that $k=m+1$ and $l$ is even can also be proved.

Hence, combining \textbf{(a)} and \textbf{(b)}, the lema is inferred to be true for the case $k=m+1$.

Consequently, by the Principle of Finite Induction, the lema is proved.

\end{proof}

\section{Proof of Theorem \ref{thm-LR-share-maximization}}
\label{appendix-Proof of Theorem-thm-LR-share-maximization}
\begin{proof} According to \eqref{eq-SC-determine}, to estimate ${\hat u_i}$ is to calculate the $i^{th}$ LRs at length $N$, $L_N^{\left( i \right)}\left( {y_1^N,\hat u_1^{i - 1}} \right)$. We firstly prove the proposition that the $i^{th}$ and ${\left( {i - 1} \right)^{th}}$ LR at length $N$ can not share the same $2^k$ LRs at length ${{N \mathord{\left/ {\vphantom {N 2}} \right. \kern-\nulldelimiterspace} 2}^k}$ if and only if there exists an integer $m$ satisfying that $i-1=m \cdot {2^k}$. (i) If there exists an integer $m$ satisfying that $i-1=m \cdot {2^k}$, then
\[\begin{array}{l}
i - 1 \in \left\{ {\left( {m - 1} \right) \cdot {2^k} + 1, \cdots ,m \cdot {2^k}} \right\}\\
i \in \left\{ {m \cdot {2^k} + 1, \cdots ,\left( {m + 1} \right) \cdot {2^k}} \right\}
\end{array},\]
which means the ${\left( {i - 1} \right)^{th}}$ and $i^{th}$ LR at length $N$ depend on two different groups of the $2^k$ LRs at length $N/2^k$. (ii) If there does not exists an integer $m$ satisfying that $i-1=m \cdot {2^k}$, then $i-1$ is expressed as $i - 1 = a \cdot {2^k} + b$ where $a$ and $b$ are both integers and $1\le b \le 2^k-1$. So
\[i - 1,i \in \left\{ {a \cdot {2^k} + 1, \cdots ,\left( {a + 1} \right) \cdot {2^k}} \right\},\]
which means the ${\left( {i - 1} \right)^{th}}$ and $i^{th}$ LR at length $N$ depend on the same $2^k$ LRs at length $N/2^k$.

Now we employ the newly proved proposition to show the truth of the theorem.

\textbf{(a)} If $i=1$, then for all $0 \le k \le n=z_1$ there exists the integer 0 satisfying that $i-1=0 \cdot {2^{k}}$, which means in this case all the LRs at all the lengths should be estimated. Since the $2^{z_1}$ LRs at length 1 are channel LRs, \eqref{eq-channel-LR} indicates that they can be estimated immediately. Afterwards the LRs at length $2, 4, \cdots , N$ can be calculated in sequence according to (\ref{eq-SC-recursive-formula-LR}).

\textbf{(b)} Otherwise, for any given integer $k > {z_i}$, it does not exist any integer $m$ satisfying that $i-1=m \cdot {2^{k}}$, which means that the $i^{th}$ and ${\left( {i - 1} \right)^{th}}$ LR at length $N$ share the same $2^{k}$ LRs at length ${{N \mathord{\left/ {\vphantom {N 2}} \right. \kern-\nulldelimiterspace} 2}^{k}}$ when $k > {z_i}$. On the other hand, for any given integer $k \le {z_i}$, there exists the integer $m_i={2^{{z_i} - k}}{m_o}$ satisfying that $i-1=m_i \cdot {2^{k}}$, which means that the $i^{th}$ and ${\left( {i - 1} \right)^{th}}$ LR at length $N$ can not share the same $2^{k}$ LRs at length ${{N \mathord{\left/ {\vphantom {N 2}} \right. \kern-\nulldelimiterspace} 2}^{k}}$ when $k \le {z_i}$. Therefore only the LRs at length $N, {N \mathord{\left/
 {\vphantom {N 2}} \right. \kern-\nulldelimiterspace} 2}, \cdots ,{N \mathord{\left/ {\vphantom {N {{2^{z_i}}}}} \right. \kern-\nulldelimiterspace} {{2^{z_i}}}}$ should be calculated. Since the shared $2^{{z_i}+1}$ LRs at length ${{N \mathord{\left/ {\vphantom {N 2}} \right. \kern-\nulldelimiterspace} 2}^{{z_i} + 1}}$ have been calculated during the estimation of ${{\hat u}_{i - 1}}$ \footnote{More precisely, the shared $2^{{z_i}+1}$ LRs at length ${{N \mathord{\left/ {\vphantom {N 2}} \right. \kern-\nulldelimiterspace} 2}^{{z_i} + 1}}$ are either calculated during the estimation of ${{\hat u}_{i - 1}}$, or also shared by ${{\hat u}_{i - 1}}$ and ${{\hat u}_{i - 2}}$ . In any case, the shared $2^{{z_i}+1}$ LRs have already been calculated.}, the $2^{z_i}$ LRs at length ${{N \mathord{\left/ {\vphantom {N 2}} \right. \kern-\nulldelimiterspace} 2}^{z_i}}$ can be directly computed according to (\ref{eq-SC-recursive-formula-LR}). Afterwards the LRs at length ${{N \mathord{\left/
 {\vphantom {N 2}} \right. \kern-\nulldelimiterspace} 2}^{{z_i} - 1}},{{N \mathord{\left/ {\vphantom {N 2}} \right.
 \kern-\nulldelimiterspace} 2}^{{z_i} - 2}} \cdots , N$ can be calculated in sequence according to (\ref{eq-SC-recursive-formula-LR}).

Combining \textbf{(a)} and \textbf{(b)}, the theorem is proved.
\end{proof}

\end{appendices}

\bibliographystyle{IEEEtran}
\bibliography{IEEEabrv,ref}

\begin{thebibliography}{10}
\providecommand{\url}[1]{#1}
\csname url@samestyle\endcsname
\providecommand{\newblock}{\relax}
\providecommand{\bibinfo}[2]{#2}
\providecommand{\BIBentrySTDinterwordspacing}{\spaceskip=0pt\relax}
\providecommand{\BIBentryALTinterwordstretchfactor}{4}
\providecommand{\BIBentryALTinterwordspacing}{\spaceskip=\fontdimen2\font plus
\BIBentryALTinterwordstretchfactor\fontdimen3\font minus
  \fontdimen4\font\relax}
\providecommand{\BIBforeignlanguage}[2]{{%
\expandafter\ifx\csname l@#1\endcsname\relax
\typeout{** WARNING: IEEEtran.bst: No hyphenation pattern has been}%
\typeout{** loaded for the language `#1'. Using the pattern for}%
\typeout{** the default language instead.}%
\else
\language=\csname l@#1\endcsname
\fi
#2}}
\providecommand{\BIBdecl}{\relax}
\BIBdecl

\bibitem{shannon1948bell}
C.~E. Shannon, ``A mathematical theory of communication,'' \emph{The Bell
  Technical Journal}, vol.~27, no.~4, pp. 379--423, 1948.

\bibitem{arikan2009channel}
E.~Arikan, ``Channel polarization: A method for constructing capacity-achieving
  codes for symmetric binary-input memoryless channels,'' \emph{Information
  Theory, IEEE Transactions on}, vol.~55, no.~7, pp. 3051--3073, 2009.

\bibitem{hussami2009performance}
N.~Hussami, S.~B. Korada, and R.~Urbanke, ``Performance of polar codes for
  channel and source coding,'' in \emph{Information Theory, 2009. ISIT 2009.
  IEEE International Symposium on}.\hskip 1em plus 0.5em minus 0.4em\relax
  IEEE, 2009, pp. 1488--1492.

\bibitem{tal2011list}
I.~Tal and A.~Vardy, ``List decoding of polar codes,'' in \emph{Information
  Theory Proceedings (ISIT), 2011 IEEE International Symposium on}.\hskip 1em
  plus 0.5em minus 0.4em\relax IEEE, 2011, pp. 1--5.

\bibitem{niu2012stack}
K.~Niu and K.~Chen, ``Stack decoding of polar codes,'' \emph{Electronics
  letters}, vol.~48, no.~12, pp. 695--697, 2012.

\bibitem{Alamdar2011A}
A.~Alamdar-Yazdi and F.~R. Kschischang, ``A simplified successive-cancellation
  decoder for polar codes,'' \emph{IEEE communications letters}, vol.~15,
  no.~12, pp. 1378--1380, 2011.

\bibitem{leroux2011hardware}
C.~Leroux, I.~Tal, A.~Vardy, and W.~J. Gross, ``Hardware architectures for
  successive cancellation decoding of polar codes,'' in \emph{Acoustics, Speech
  and Signal Processing (ICASSP), 2011 IEEE International Conference on}.\hskip
  1em plus 0.5em minus 0.4em\relax IEEE, 2011, pp. 1665--1668.

\bibitem{Chuan2012Reduced}
Z.~Chuan, Y.~Bo, and K.~K. Parhi, ``Reduced-latency sc polar decoder
  architectures,'' in \emph{Communications (ICC), 2012 IEEE International
  Conference on}, pp. 3471--3475.

\bibitem{huang2012latency}
Z.~Huang, C.~Diao, and M.~Chen, ``Latency reduced method for modified
  successive cancellation decoding of polar codes,'' \emph{Electronics
  letters}, vol.~48, no.~23, pp. 1505--1506, 2012.

\bibitem{Leroux2013A}
C.~Leroux, A.~J. Raymond, G.~Sarkis, and W.~J. Gross, ``A semi-parallel
  successive-cancellation decoder for polar codes,'' \emph{Signal Processing,
  IEEE Transactions on}, vol.~61, no.~2, pp. 289--299, 2013.

\bibitem{zhang2013low}
C.~Zhang and K.~K. Parhi, ``Low-latency sequential and overlapped architectures
  for successive cancellation polar decoder,'' \emph{Signal Processing, IEEE
  Transactions on}, vol.~61, no.~10, pp. 2429--2441, 2013.

\bibitem{Sarkis2013Increasing}
G.~Sarkis and W.~J. Gross, ``Increasing the throughput of polar decoders,''
  \emph{Communications Letters, IEEE}, vol.~17, no.~4, pp. 725--728, 2013.

\bibitem{Sarkis2014Fast}
G.~Sarkis, P.~Giard, A.~Vardy, C.~Thibeault, and W.~J. Gross, ``Fast polar
  decoders: Algorithm and implementation,'' \emph{Selected Areas in
  Communications, IEEE Journal on}, vol.~32, no.~5, pp. 946--957, 2014.

\bibitem{YouZhe2014An}
Y.~Fan and C.-Y. Tsui, ``An efficient partial-sum network architecture for
  semi-parallel polar codes decoder implementation,'' \emph{Signal Processing,
  IEEE Transactions on}, vol.~62, no.~12, pp. 3165--3179, June 2014.

\bibitem{Le2015Multi}
B.~Le~Gal, C.~Leroux, and C.~Jego, ``Multi-gb/s software decoding of polar
  codes,'' \emph{Signal Processing, IEEE Transactions on}, vol.~63, no.~2, pp.
  349--359, 2015.

\bibitem{Yoo2015Partially}
H.~Yoo and I.-C. Park, ``Partially parallel encoder architecture for long polar
  codes,'' \emph{Circuits and Systems II: Express Briefs, IEEE Transactions
  on}, vol.~62, no.~3, pp. 306--310, 2015.

\bibitem{martinez2013key}
J.~Martinez-Mateo, D.~Elkouss, and V.~Martin, ``Key reconciliation for high
  performance quantum key distribution,'' \emph{Scientific reports}, vol.~3,
  2013.

\bibitem{qiong2014study}
L.~Qiong, L.~Dan, M.~Haokun, N.~Xiamu, L.~Tian, and G.~Hong, ``Study on error
  reconciliation in quantum key distribution,'' \emph{Quantum Information \&
  Computation}, vol.~14, no. 13-14, pp. 1117--1135, 2014.

\end{thebibliography}

\end{document}